\mathchardef\mhyphen="2D
\newtheorem{thm}{Theorem}
\newtheorem{lem}[thm]{Lemma}
\newtheorem{cor}[thm]{Corollary}
\theoremstyle{definition}
\newtheorem{defn}[thm]{Definition}
\newtheorem{ex}[thm]{Example}
\theoremstyle{remark}
\newcommand{\R}{\mathcal{R}}
\newcommand{\NP}{\mathbf{NP}}
\newcommand{\REG}{\mathbf{REG}}
\newcommand{\IL}{\mathbf{IL}}
\newcommand{\LIL}{\mathbf{LIL}}
\newcommand{\CFL}{\mathbf{CFL}}
\newcommand{\CSL}{\mathbf{CSL}}
\newcommand{\ins}{\leftarrow}
\newcommand{\eps}{\epsilon}
\begin{document}

\title{Deleting Powers in Words}
\author{John Machacek}
\affil{Department of Mathematics, Michigan State University\\ \texttt{machace5@math.msu.edu}}
\date{}

\maketitle

\begin{abstract}
We consider the language consisting of all words such that it is possible to obtain the empty word by iteratively deleting powers.
It turns out that in the case of deleting squares in binary words this language is regular, and in the case of deleting squares in words over a larger alphabet the language is not regular.
However, for deleting squares over any alphabet we find that this language can be generated by a linear index grammar which is a mildly context sensitive grammar formalism.
In the general case we show that this language is generated by an indexed grammar.

\end{abstract}

\section{Introduction}

Let $\Sigma$ be a finite alphabet.
For a word $w \in \Sigma^*$ and $a \in \Sigma$ we let $|w|$ denote the length of $w$ and $|w|_a$ denote the number of occurrences of the letter $a$.
For an integer $p > 0,$ a \emph{$p$th-power} is a $p$-fold repetition $u^p = uu\cdots u$ of non-empty word $u \in \Sigma^*$.
As an example, $(ab)^3 = ababab$ is a $3$rd-power.
Given a word $w = a_1a_2\cdots a_n \in \Sigma^*,$ we say $w$ contains the word $u \in \Sigma^*$ if $u = a_i a_{i+1} \cdots a_j$ for some $1 \leq i \leq j \leq n$.
A word is called \emph{$p$th-power-free} if it contains no $p$th-powers.
For $p=2$ and $p=3$ we will refer to $p$th-powers as \emph{squares} and \emph{cubes} respectively.
We let $\Sigma_k$ denote an alphabet of size $k$, and we typically consider $\Sigma_{k} \subsetneq \Sigma_{k+1}$.
Also, for $k=2$ and $k=3$ we call elements of $\Sigma^*_k$ \emph{binary} and \emph{ternary} words respectively where $\Sigma_2 = \{a,b\}$ and $\Sigma_3 = \{a,b,c\}$.
We denote the class of regular languages, context-free languages, and context-sensitive languages by $\REG, \CFL,$ and $\CSL$ respectively.
These classes of languages are standard, and we will assume the reader has familiarity with them.
We denote the class of indexed languages and linear indexed languages by $\IL$ and $\LIL$ respectively.
Definitions of the grammars which generate indexed languages and linear indexed languages will be given in Section~\ref{sec:ins}.

Given a word $w \in \Sigma^*$ and an integer $p > 0$ we consider the possible outcomes of iteratively deleting $p$th-powers from $w$ until we have a $p$th-power-free word.
In particular, we are interested in when we can obtain the empty word $\eps$.
Let us now consider an example.

\begin{ex}
Let $w = ababbcbc \in \Sigma_3^*$ and $p = 2$.
So, we are considering squares in a ternary word.
The word $w$ has the squares $(ab)^2$, $b^2$, and $(bc)^2$.
Squares can be deleted from $w$ in the following ways
$$ (abab)bcbc \to bcbc \to \eps$$
$$ aba(bb)cbc \to abacbc$$
$$ abab(bcbc) \to abab \to \eps$$
each time ending with a square-free word.
Notice the resulting square-free word depends on how we choose to delete squares from $w$.
\label{ex:del}
\end{ex}

We call the operation $ux^pv \to uv$ a \emph{$p$-deletion from $ux^pv$ to $uv$}.
Now consider a sequence of $p$-deletions $(d_1, d_2, \cdots, d_{\ell})$ such that $d_i$ is a $p$-deletion from $w_{i-1}$ to $w_i$.
Such a sequence of $p$-deletions will be said to start with $w_0$ and terminate with $w_{\ell}$.
We will only consider sequences of $p$-deletions of this form.
A word $w$ is called \emph{$p$-deletable} if there exists some sequence of $p$-deletions which starts with $w$ and terminates with $\eps$.
A word $w$ is called \emph{$p$-strongly-deletable} if $w$ is $p$-deletable and the only $p$th-power-free word that can be obtain from $w$ by a sequence of $p$-deletions is $\eps$.
So, the word $w = ababbcbc \in \Sigma_3^*$ in Example~\ref{ex:del} is $2$-deletable but not $2$-strongly-deletable.
We allow the empty sequence of $p$-deletions and so $\eps$ is $p$-deletable and $p$-strongly-deletable for any $p > 0$.

Given any $k,p > 0$ we define the following languages
\begin{align*}
D_{k,p} &= \{w \in \Sigma_k^* : w \mathrm{\;is\;}p\mathrm{\mhyphen deletable}\}\\
SD_{k,p} &= \{w \in \Sigma_k^* : w \mathrm{\;is\;}p\mathrm{\mhyphen strongly \mhyphen deletable}\}
\end{align*}
consisting of $p$-deletable and $p$-strongly-deletable words over $\Sigma_k$ respectively.
Our focus will be on studying the languages $D_{k,p}$ and $SD_{k,p}$.

The study of powers in words has a long history.
Powers in words were first systematically studied by Thue~\cite{Thue, Thue1} where an interesting dichotomy is observed.
Every binary word of length at least $4$ must contain a square.
This can be seen by simply listing all $16$ binary words of length $4$.
However, Thue constructs an infinite binary word which is cube-free.
An infinite ternary word which is square-free is also constructed.
In our study we will seen a similar phenomenon where the behavior of binary words differs from ternary words and the behavior of squares differs from cubes.

We will focus on squares in Section~\ref{sec:reg}.
In Theorem~\ref{thm:binsq}, we show that $SD_{2,2} = D_{2,2}$ is a regular language while in Theorem~\ref{thm:notreg} we show that neither $SD_{k,2}$ or $D_{k,2}$ is a regular language for $k > 2$.
In Section~\ref{sec:ins} we show how our languages $D_{k,p}$ are related to Kari's theory of insertion~\cite{Kari}.
We show in Corollary~\ref{cor:index} that $D_{k,p}$ is an indexed language and $D_{k,2}$ is a linear indexed language for any $k,p > 0$.

We now make some basic observations about the languages $SD_{k,p}$ and $D_{k,p}$.
Take any $k,p > 0$ and $w \in \Sigma^*_k$.
First note we can check if $w$ is a $p$th-power in $O(|w|)$ time.
We can check if $w$ is $p$th-power free in polynomial time since $w$ has only $O(|w|^2)$ subwords.
Also, the length of any sequence of $p$-deletions starting with $w$ is $O(|w|)$.
Therefore we can verify if sequence of $p$-deletions results in $\eps$ in polynomial time, and we can also verify if a sequence of $p$-deletions results in a non-empty $p$th-power-free word in polynomial time.
It follows that for any $k,p > 0$ we have $D_{k,p} \in \NP$ and $SD_{k,p} \in co$-$\NP$.

Our next observation is that 
$$SD_{k,p} \subseteq D_{k,p}$$
which is immediate from the definitions.
In general this containment is strict as demonstrated by Example~\ref{ex:del}, but we will see equality of $D_{k,p}$ and $SD_{k,p}$ in some special cases in Section~\ref{sec:reg}.
The special cases where equality occurs are the trivial cases of deleting $1$st-powers and deleting powers over $\Sigma^*_1$ as well as the case of deleting squares in binary words.
Next we give a lemma which contains a necessary condition for a word to be $p$-deletable.

\begin{lem}
If $w \in D_{k,p}$, then $|w|_a \equiv 0 \pmod p$ for all $a \in \Sigma_k$.
In particular if $w \in D_{k,p}$, then $|w| \equiv 0 \pmod p$.
\label{lem:modp}
\end{lem}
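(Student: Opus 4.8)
The plan is to prove this by induction on the length of a terminating sequence of $p$-deletions, observing that each single $p$-deletion preserves the residue of every letter-count modulo $p$.

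First I would set up the induction. Suppose $w \in D_{k,p}$, so there is a sequence of $p$-deletions $(d_1, \ldots, d_\ell)$ starting with $w_0 = w$ and terminating with $w_\ell = \eps$. I would argue that for each $i$ and each $a \in \Sigma_k$ we have $|w_{i-1}|_a \equiv |w_i|_a \pmod p$. Indeed, the deletion $d_i$ has the form $ux^pv \to uv$ for some non-empty word $x$, so $|w_{i-1}|_a = |u|_a + p\,|x|_a + |v|_a$ and $|w_i|_a = |u|_a + |v|_a$; hence $|w_{i-1}|_a - |w_i|_a = p\,|x|_a \equiv 0 \pmod p$. Chaining these congruences from $i = 1$ to $i = \ell$ gives $|w|_a = |w_0|_a \equiv |w_\ell|_a = |\eps|_a = 0 \pmod p$, which is the first claim.

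For the second claim I would simply sum over the alphabet: $|w| = \sum_{a \in \Sigma_k} |w|_a$, and since each summand is $\equiv 0 \pmod p$, so is the total. Alternatively one can note that length is also preserved modulo $p$ directly by a single $p$-deletion, since deleting $x^p$ removes exactly $p\,|x|$ letters.

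There is essentially no obstacle here; the only thing to be careful about is making the invariant precise and noting that it holds letter-by-letter, not merely for the total length, since the letter-by-letter statement is the stronger one and is what we actually want to record for later use. I would keep the write-up to a few lines, emphasizing the invariant "$|{\cdot}|_a \bmod p$ is unchanged by a $p$-deletion" as the one idea driving the argument.
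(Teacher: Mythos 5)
Your proposal is correct and is essentially the paper's own argument: the paper's proof consists of exactly the observation that each $p$-deletion preserves every letter-count modulo $p$, combined with $|\eps|_a = 0$. You have merely spelled out the invariant and the chaining in more detail, so there is nothing further to add.
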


\begin{proof}
For the empty word $\eps$ we have $|\eps|_a = 0$ for all $a \in \Sigma_k$.
Since the number of occurrences of any letter in a word is preserved modulo $p$ when preforming a $p$-deletion, the result follows.
\end{proof}

We can use Lemma~\ref{lem:modp} to determine that a word is not $p$-deletable.
Consider the following example.

\begin{ex}[Fibonacci words]
Fix $p>0$. The Fibonacci words are words over $\Sigma_2= \{a,b\}$ defined by $S_0 = a$, $S_1 = ab$, and $S_n = S_{n-2}S_{n-1}$ for $n \geq 2$.
Observe that $|S_n|_a = F_{n+1}$ and $|S_n|_b = F_n$ where $F_n$ denotes the $n$th Fibonacci number.
Since any two consecutive Fibonacci numbers are relatively prime, we can never have $|S_n|_a \equiv 0 \pmod p$ and $|S_n|_b \equiv 0 \pmod p$ simultaneously.
Thus, by Lemma~\ref{lem:modp} it follows that $S_n \not\in D_{2,p}$ for any $n \geq 0$.
\end{ex}

\section{Squares and Regular Languages}
\label{sec:reg}
In this section we will given an explicit description of the languages $D_{k,p}$ and $SD_{k,p}$ for certain values of $k,p > 0$.
For certain values of $k,p>0$ for which we can describe the languages $D_{k,p}$ and $SD_{k,p}$, these two languages turn out to be equal and are regular languages.
We also show in this section to $D_{k,2}$ and $SD_{k,2}$ are not regular for $k > 2$.
Recall that $\REG$ denotes the class of regular languages.

We first consider two trivial cases, for any $k,p > 0$
\begin{align*}
D_{k,1} &= SD_{k,1} = \Sigma^*_k\\
D_{1,p} &= SD_{1,p} = \{w \in \Sigma_1^* : p \mathrm{\;divides\;} |w| \}
\end{align*}
Notice in both of the above cases, we have equality of the language of deletable words and the language of strongly-deletable words.
Also, both $D_{k,1} = SD_{k,1}$ and $D_{1,p} = SD_{1,p}$ are regular languages.
The first nontrivial case we encounter is $k=2$ and $p=2$ where we look at squares in binary words.
Here, we still have equality of the languages of deletable and strongly-deletable words, and this language again is regular.

\begin{thm}
Let $\Sigma_2 = \{a, b\}$, then 
$$SD_{2,2} = D_{2,2} = \{w \in \Sigma_2^* : |w|_a \mathrm{\;and\;} |w|_b \mathrm{\;are\;even}\}.$$
\label{thm:binsq}
\end{thm}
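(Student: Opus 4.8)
The plan is to establish all the equalities by sandwiching. Applying Lemma~\ref{lem:modp} with $p = 2$ gives $D_{2,2} \subseteq \{w \in \Sigma_2^* : |w|_a \text{ and } |w|_b \text{ are even}\}$, and since $SD_{2,2} \subseteq D_{2,2}$ always holds, it is enough to prove the single reverse inclusion $\{w \in \Sigma_2^* : |w|_a \text{ and } |w|_b \text{ are even}\} \subseteq SD_{2,2}$. Once that is in hand we have the chain $\{w : |w|_a,\,|w|_b \text{ even}\} \subseteq SD_{2,2} \subseteq D_{2,2} \subseteq \{w : |w|_a,\,|w|_b \text{ even}\}$, so all three sets coincide; and regularity comes for free, since the condition ``both letter counts even'' is decided by a four-state DFA tracking the two parities.

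So fix $w$ with $|w|_a$ and $|w|_b$ both even. The pivotal remark is that a $2$-deletion $ux^2v \to uv$ removes $2|x|_a$ letters $a$ and $2|x|_b$ letters $b$; hence the parities of $|\cdot|_a$ and $|\cdot|_b$ are invariant under $2$-deletion, and every word reachable from $w$ by a sequence of $2$-deletions still has an even number of $a$'s and an even number of $b$'s. I would first show $w$ is $2$-deletable by induction on $|w|$: if $w = \eps$ we are done; otherwise $|w|$ is a positive even integer, so $|w| \geq 2$, and $w$ contains a square --- for $|w| = 2$ we have $w \in \{aa, bb\}$, and for $|w| \geq 4$ this is exactly the elementary fact recalled in the introduction (every binary word of length $4$, hence every length-$4$ factor of $w$, contains a square). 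Deleting that square produces a strictly shorter word which still has both counts even, so by the inductive hypothesis it is $2$-deletable, and therefore so is $w$.

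Next I would upgrade this to $2$-strong-deletability. Suppose $w'$ is a square-free word reachable from $w$. Since a binary word of length at least $4$ contains a square, $|w'| \leq 3$, so $w'$ is one of $\eps, a, b, ab, ba, aba, bab$ --- the complete list of square-free binary words of length at most $3$. Of these, only $\eps$ has an even number of both $a$'s and $b$'s, and by the parity invariant $w'$ must satisfy that condition; hence $w' = \eps$. Combined with the $2$-deletability already shown, this says $w$ is $2$-strongly-deletable, which finishes the reverse inclusion and the proof.

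The argument has no real obstacle: it is just the combination of Lemma~\ref{lem:modp} with Thue's observation that long binary words are not square-free. The only step that needs a little care is the ``strongly'' clause --- one must verify not merely that $\eps$ can be produced but that nothing else square-free can be, and this is handled by the (tiny) explicit enumeration of square-free binary words above.
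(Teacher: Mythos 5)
Your proof is correct and follows essentially the same route as the paper's: the parity condition from Lemma~\ref{lem:modp} gives one inclusion, and the fact that every binary word of length at least $4$ contains a square forces any square-free word reachable by deletions to have length at most $3$ and hence, by the parity invariant, to be $\eps$. Your version is slightly more explicit (the induction and the enumeration of square-free binary words of length at most $3$), but the underlying argument is identical.
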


\begin{proof}
By Lemma~\ref{lem:modp} both $|w|_a$ and $|w|_b$ being even is a necessary condition for a word to be $2$-deletable, and hence also for a word to be $2$-strongly-deletable.
Recall that any binary word $w$ with $|w| \geq 4$  is not square-free.
Now consider any binary word $w$ with both $|w|_a$ and $|w|_b$ even, and so in particular $|w|$ is even.
Arbitrarily delete squares to obtain $w'$ with $|w'| < 4$.
We will have $|w'| = 0$ or $|w'| = 2$ since $|w|$ was even.
Moreover, since both $|w|_a$ and $|w|_b$ are even we must have $w' = \eps$, $w' = aa$, or $w' = bb$.
Therefore $w$ is 2-deletable.
Since we deleted squares arbitrarily $w$ is in fact $2$-strongly-deletable.
\end{proof}

So far, in all the cases we have looked at, the necessary condition in Lemma~\ref{lem:modp} has turned out to also be a sufficient condition for a word to be $p$-deletable.
This condition is not always sufficient.
For example, note that $w = abacbc \not\in D_{3,2}$ even though $|w|_a$, $|w|_b$, and $|w|_c$ are all even.
As we continue looking at squares, but now over a larger alphabet, the techniques applied to squares in binary words can no longer be used due to the existence of arbitrarily long square-free ternary words.
We will see in what follows that the existence of an infinite square-free ternary word causes the languages $SD_{k,2}$ and $D_{k,2}$ to non-regular for $k > 2$.
For any $k>0$ and word $w = a_1a_2 \cdots a_n \in \Sigma^*_k$ we define the \emph{reverse of $w$} by $w^{\R} := a_n a_{n-1} \cdots a_1$.
We remark the for any $u \in \Sigma^*_k$ and any $p>0$ that $(u^p)^{\R} = (u^{\R})^p$.
It follows that $w \in D_{k,p}$ if and only if $w^{\R} \in D_{k,p}$, and similarly $w \in SD_{k,p}$ if and only if $w^{\R} \in SD_{k,p}$.
Also, $(w^{\R})^{\R} = w$ and $w$ is $p$th-power-free if and only if $w^{\R}$ is $p$th-power-free.

\begin{lem}
If $x \in \Sigma_k^*$ is square-free, then $xx^{\R} \in SD_{k,2}$.
\label{lem:SD}
\end{lem}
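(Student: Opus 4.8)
The plan is to induct on the length of $x$, showing that whenever $x$ is square-free, the word $xx^{\R}$ can be reduced to $\eps$ by $2$-deletions, and that \emph{every} square-free word reachable from $xx^{\R}$ is $\eps$. Write $x = a_1 a_2 \cdots a_n$ so that $xx^{\R} = a_1 \cdots a_n a_n \cdots a_1$. The crucial structural observation is that the two middle letters $a_n a_n$ form a square $a_n^2$ sitting at the center of $xx^{\R}$; deleting it yields $a_1 \cdots a_{n-1} a_{n-1} \cdots a_1 = x' x'^{\R}$ where $x' = a_1 \cdots a_{n-1}$ is a prefix of the square-free word $x$, hence itself square-free. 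So one particular deletion sequence peels off the central square repeatedly and terminates with $\eps$, giving $xx^{\R} \in D_{k,2}$ immediately. The real content is the ``strongly'' part.

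For the strong-deletability, the key claim I would isolate is the following: \emph{any} square occurring in $w = xx^{\R}$ must be the central square $a_n^2$ (equivalently, every square factor of $xx^{\R}$ straddles the midpoint in a very restricted way). Indeed, since $x$ and $x^{\R}$ are both square-free, any square factor $uu$ of $xx^{\R}$ cannot lie entirely within the first half or entirely within the second half; it must cross the midpoint. If $uu$ crosses the midpoint, write the midpoint boundary inside the first copy of $u$ or inside the second. A short combinatorial argument — comparing $u$ read forwards against the palindromic structure $x x^{\R}$ around its center — forces $u$ to be a single letter and that letter to be $a_n$: if $|u| \geq 2$, the occurrence of $uu$ around the center would produce, by reflecting across the midpoint, a square inside $x$ or inside $x^{\R}$, contradicting square-freeness. (This is where I expect to spend the most care: making the case analysis on where $uu$ sits relative to the center fully rigorous, since $u$ could in principle extend unevenly on the two sides of the midpoint.) Granting this claim, the only $2$-deletion available from $xx^{\R}$ is deleting the central $a_n^2$, which returns $x'x'^{\R}$ with $x'$ square-free and shorter; by induction on $|x|$ (base case $x = \eps$, where $xx^{\R} = \eps$ is already $\eps$ and vacuously $2$-strongly-deletable), every square-free word reachable from $xx^{\R}$ equals $\eps$, so $xx^{\R} \in SD_{k,2}$.

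The main obstacle is thus the structural claim that the center square is the \emph{unique} square factor of $xx^{\R}$ up to position, so that the deletion process is forced and induction applies cleanly. One subtlety to watch: after deleting the central square we must know the resulting word is again of the form $y y^{\R}$ with $y$ square-free — this is automatic since $y = a_1 \cdots a_{n-1}$ is a factor (indeed a prefix) of the square-free word $x$. If instead one wanted to avoid the uniqueness claim, an alternative is to argue directly that from $xx^{\R}$, \emph{whatever} square one deletes, the result is still of the form $z z^{\R}$ with $z$ square-free (or is $\eps$), and then induct; but establishing that closure property seems to require essentially the same analysis of where squares can occur, so I would go with the uniqueness route.
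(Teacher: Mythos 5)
Your overall strategy (induct on $|x|$ and control every possible square deletion from $xx^{\R}$) matches the paper's, but the key structural claim you build the induction on is false. You assert that the only square factor of $xx^{\R}$ is the central $a_n^2$, and that $|u|\geq 2$ would force, by reflecting across the midpoint, a square inside $x$ or $x^{\R}$. Counterexample: $x=aba$ is square-free, yet $xx^{\R}=abaaba=(aba)^2$ is itself a square with $|u|=3$, and neither half contains a square. More generally, whenever a suffix $a_i\cdots a_n$ of $x$ is a palindrome, the factor $a_i\cdots a_n a_n\cdots a_i$ of $xx^{\R}$ is a square of length $2(n-i+1)$. So the deletion process is \emph{not} forced to remove only the central $a_na_n$, and the ``reflection'' argument cannot be repaired: reflection maps $xx^{\R}$ to itself and squares to squares, but it does not manufacture a square lying inside one square-free half.

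The correct structural fact --- the one the paper proves --- is weaker: every square factor $u^2$ of $xx^{\R}$ must straddle the center \emph{symmetrically}, i.e.\ have the form $a_i\cdots a_n a_n\cdots a_j$ with $i=j$. The reason is that if $i\neq j$, then one full copy of $u$ lies entirely inside $x$ (or entirely inside $x^{\R}$) while the other copy contains the two central letters $a_na_n$; since both copies are occurrences of the same word $u$, the copy lying inside $x$ would then contain the square $a_n^2$, contradicting square-freeness. Deleting such a symmetric square, of whatever length, yields $a_1\cdots a_{i-1}a_{i-1}\cdots a_1 = x'(x')^{\R}$ with $x'$ a prefix of $x$, hence square-free, and the induction closes. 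This is exactly the ``alternative'' route you mention and set aside at the end of your proposal: the closure property (every deletion returns a word of the form $zz^{\R}$ with $z$ square-free) is what is true and what carries the induction, whereas the uniqueness claim does not hold.
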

\begin{proof}
Let $x = a_1a_2\cdots a_n \in \Sigma_k^*$ be square-free.
First note for any $x \in \Sigma_k^*$ we have $xx^R = a_1a_2 \cdots a_{n-1}a_n a_n a_{n-1} \cdots a_2 a_1 \in D_{k,2}$ by inductively repeating the deletion
$$a_1a_2 \cdots a_{n-1}(a_n a_n) a_{n-1} \cdots a_2 a_1 \to a_1a_2 \cdots a_{n-1} a_{n-1} \cdots a_2 a_1.$$

We will show that $xx^{\R} \in SD_{k,p}$ by induction.
Observe that if $|x| = 0$, then $xx^{\R} = \eps \in SD_{k,2}$.
To show $xx^{\R} \in SD_{k,p}$ it suffices to show that after deleting any square from $xx^{\R}$ we obtain a $2$-strongly-deletable word.
Consider any square in $xx^R$ which must be of the form
$$u^2 = a_i a_{i+1} \cdots a_n a_n a_{n-1} \cdots a_j$$
since it must cross from $x$ into $x^R$ as both $x$ and $x^R$ are square-free.
We claim that we must have $i = j$.
Assume $i < j$ then we have some $i < \ell < n$ where $u = a_i a_{i+1} \cdots a_{\ell}$ and $u = a_{\ell+1} \cdots a_n a_n \cdots a_j$.
We then see that the square $a_n^2$ must occur in $x$.
This is a contradiction to $x$ being square-free.
A similar contradiction is reached if $i > j$.
So, $i = j$ and after deleting our square we obtain the word $a_1 \cdots a_{i-1} a_{i-1} \cdots a_1$ which is $2$-strongly-deletable by induction.
Therefore it follows that $xx^R \in SD_{k,2}$.
\end{proof}

\begin{lem}
If $x,y \in \Sigma_k^*$ such that $xy$ is square-free and $|y| > 0$, then $xyx^{\R} \not\in D_{k,2}$.
\label{lem:D}
\end{lem}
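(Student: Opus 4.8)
The plan is to argue by contradiction: suppose $xyx^{\R} \in D_{k,2}$ and consider a sequence of $2$-deletions witnessing this. The key invariant I want to exploit is that $xy$ is square-free, so any square deleted from $xyx^{\R}$ must overlap the ``reversed'' suffix $x^{\R}$ in a nontrivial way; in particular, at the very first deletion the square cannot lie entirely inside $xy$, and since $x^{\R}$ by itself is also square-free (reverse of the square-free prefix of $xy$), the first deleted square $u^2$ must straddle the boundary between $xy$ and $x^{\R}$. I would write $x = a_1 \cdots a_m$, $y = a_{m+1} \cdots a_n$ (so $xy = a_1 \cdots a_n$ is square-free with $n > m$), and $x^{\R} = a_m a_{m-1} \cdots a_1$, and then analyze where the centre of the straddling square can fall.

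The first main step is a structural lemma about that first deletion, analogous to the $i=j$ argument in Lemma~\ref{lem:SD}: I claim the only square one can ever delete that straddles the $xy \mid x^{\R}$ junction is $a_1^2$ (the junction square formed by the last letter of $xy$ against the first letter of $x^{\R}$ only when $n = m$, which is excluded — so actually I expect to show \emph{no} straddling square exists unless $y$ is empty). Concretely, a straddling square has the form $u^2 = a_i \cdots a_n \, a_m a_{m-1} \cdots a_j$; splitting $u^2$ at its midpoint and using that $xy$ is square-free forces, as in Lemma~\ref{lem:SD}, either a repeated letter $a_n a_n$ inside $xy$ (impossible) or the equation $u = a_i \cdots a_n = a_m \cdots a_j$ reversed, which would make $y = a_{m+1}\cdots a_n$ a suffix that matches a reversed block of $x$ — and then combining with square-freeness of $xy$ I want to derive that $y$ must be empty, contradicting $|y| > 0$.

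The second step handles the possibility that we first perform several deletions \emph{inside} $x^{\R}$ (or inside $xy$) before any straddling deletion occurs. Deleting a square $v^2$ inside $xy$ is impossible since $xy$ is square-free; deleting a square inside $x^{\R}$ replaces $x^{\R}$ by $(x')^{\R}$ for some shorter square-free word $x'$ obtained from $x$ by a $2$-deletion, but crucially $x'$ need not be a suffix-compatible continuation of $y$. So I would set up the induction on $|x^{\R}|$ (equivalently on $|x|$): after any legal deletion, the word is still of the form $x' y (x')^{\R}$ or $x y' x^{\R}$-type with $y'$ still nonempty and $x'y'$ still square-free, and then invoke the induction hypothesis — taking care that the base case $|x| = 0$ gives $y x^{\R} = y$ with $|y| > 0$ and $y$ square-free, hence not $2$-deletable by Lemma~\ref{lem:modp} (or simply because a nonempty square-free word admits no deletion at all).

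The step I expect to be the genuine obstacle is making the ``straddling square forces $y = \eps$'' claim fully rigorous: the combinatorics of a square whose two halves each span part of $xy$ and part of $x^{\R}$ is more delicate than in Lemma~\ref{lem:SD}, because here $y$ sits between the two mirror copies rather than the copies being adjacent. I anticipate needing a careful case split on whether the midpoint of $u^2$ lies in $xy$, at the junction, or in $x^{\R}$, and in each case producing either a short square inside $xy$ or inside $x^{\R}$ (both forbidden) or forcing a border/overlap relation on $xy$ that collapses $y$ to the empty word. Once that lemma is in hand, the inductive wrapper is routine.
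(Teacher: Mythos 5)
Your first main step rests on a claim that is false: straddling squares do exist even when $y$ is nonempty, so you cannot hope to show that ``a straddling square forces $y = \eps$.'' For a concrete counterexample, take $x = ab$ and $y = cacb$, so that $xy = abcacb$ is square-free and $|y|>0$; then $xyx^{\R} = abcacbba$ contains the square $b^2$ straddling the junction between $xy$ and $x^{\R}$. More generally, any square of the form $b_i \cdots b_{2n}\, a_m \cdots a_j$ (starting inside $y$ and ending inside $x^{\R}$) or $a_i \cdots a_m\, y\, a_m \cdots a_j$ with $i \neq j$ can occur. The only configuration that square-freeness of $xy$ actually rules out is the perfectly centred one $a_i \cdots a_m\, y\, a_m \cdots a_i$ (that is the analogue of the $i=j$ argument in Lemma~\ref{lem:SD}, and it is the \emph{only} place that argument applies here). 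So the difficulty you flagged at the end is not a technical obstacle to be overcome --- the statement you are trying to make rigorous is simply not true, and a proof built on it collapses.

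The correct engine, which the paper uses, is the observation you relegate to your ``second step'' but attach to the wrong case: deletions strictly inside $xy$ or strictly inside $x^{\R}$ are impossible (both are square-free, so that part of your argument is vacuous rather than substantive), and deleting any \emph{straddling} square from $xyx^{\R}$ produces a word that is again of the form $x'y'(x')^{\R}$ --- or the reverse of such a word, which is equivalent since $D_{k,2}$ is closed under reversal --- with $x'y'$ square-free, $|y'|>0$, and $|x'| < |x|$. One then inducts on $|x|$, with exactly the base case you describe. Your proposal contains the right base case and the right induction parameter, but the structural lemma you plan to prove about the first deletion is the opposite of what is needed: rather than showing no deletion is available, you must show every available deletion preserves the shape of the word while strictly shrinking $x$.
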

\begin{proof}
We will show $xyx^{\R} \not\in D_{k,2}$ by showing that if we delete any square from $xyx^{\R}$ we obtain a word which is not $2$-deletable.
If $|x| = 0$ we are done since $xyx^{\R} = y \not\in D_{k,2}$ as $y$ is square-free and $|y| > 0$.
We induct on $|x|$.
We can assume that $|y| = 2n$, otherwise $xyx^{\R} \not\in D_{k,2}$ by Lemma~\ref{lem:modp} since $|xyx^{\R}|$ would be odd.
Let $x = a_1a_2 \cdots a_m$ and $y = b_1b_2 \cdots b_{2n}$.
We note any square $u^2$ in $xyx^{\R}$ must cross from $xy$ into $x^{\R}$ since $xy$ and $x^{\R}$ are square-free. We consider the following cases for how this square can occur.

First if $u^2 = b_i \cdots b_{2n} a_m \cdots a_j$, then after deleting $u^2$ we obtain
$$w' = (a_1 \cdots a_{j-1})(a_j \cdots a_m b_1 \cdots b_{i-1})(a_{j-1} \cdots a_1).$$
Let $x' = a_1 \cdots a_{j-1}$ and $y' = a_j \cdots a_m b_1 \cdots b_{i-1}$, then $x'y'$ is square-free with $|y'| > 0$ and $w' = x'y'(x')^{\R}$ where $|x'| < |x|$.
So, $w' \not\in D_{k,2}$ by induction

Second if $u^2 = a_i \cdots a_m y a_m \cdots a_j$ we have three subcases.
If $i > j$ then after deleting $u^2$ 
$$w' = a_1 \cdots a_{i-1} a_{j-1} \cdots a_1 = (a_1 \cdots a_{j-1})(a_j \cdots a_{i-1})(a_{j-1} \cdots a_1).$$
Let $x' =a_1 \cdots a_{j-1}$ and $y' =a_j \cdots a_{i-1}$, then $x'y'$ is square-free with $|y'| > 0$ and $w' = x'y'(x')^R$ where $|x'| < |x|$.
So,  $w' \not\in D_{k,2}$ by induction.
If $i < j$ then after deleting $u^2$ we have
$$w' =a_1 \cdots a_{i-1} a_{j-1} \cdots a_1 = (a_1 \cdots a_{i-1})(a_{j-1} \cdots a_i)(a_{i-1} \cdots a_1).$$
Let 
$$w'' = (w')^R =  (a_1 \cdots a_{i-1})(a_i \cdots a_{j-1})(a_{i-1} \cdots a_1)$$
and also let $x'' = a_1 \cdots a_{i-1}$, and $y'' = a_i \cdots a_{j-1}$.
Now $x''y''$ is square-free with $|y''| > 0$ and $w'' = x''y''(x'')^{\R}$ where $|x''| < |x|$.
So, $w'' \not\in D_{k,2}$ by induction and hence $w' = (w'')^{\R} \not\in D_{k,2}$.
The final case is $i = j$ which we claim cannot happen.
If it were the case that $u^2 = a_i \cdots a_m y a_m \cdots a_i$, then since $y =b_1 \cdots b_{2n}$ we would have 
$$a_i \cdots a_m b_1 \cdots b_n = b_{n+1} \cdots b_{2n} a_m \cdots a_i.$$
This would imply a contradiction to $xy$ being square-free since we would have $b_n = a_i$ and $b_{n+1} = a_i$ thus $a_i^2 = b_nb_{n+1}$ would be a square contained in $xy$.
\end{proof}

The proof of the next theorem uses the Myhill-Nerode Theorem which provides a necessary and sufficient condition for a language to be regular. 
Given a language $L$ over $\Sigma$, we consider the equivalence relation $\sim_L$ on $\Sigma^*$ defined as follows. 
For any $x, y \in \Sigma^*$ set $x \sim_L y$ whenever for every $z \in \Sigma^*$ we have $xz \in L$ if and only if $yz \in L$.
The Myhill-Nerode Theorem says a language $L$ is regular if and only if the equivalence relation $\sim_L$ has a finite number of equivalence classes. 

\begin{thm}
For $k > 2$,  $SD_{k,2} \not\in \REG$ and $D_{k,2} \not\in \REG$.
\label{thm:notreg}
\end{thm}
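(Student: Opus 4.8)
The plan is to apply the Myhill--Nerode Theorem stated just above. Since $k > 2$ we have $\Sigma_3 \subseteq \Sigma_k$, so by Thue there is an infinite square-free word $t = t_1 t_2 t_3 \cdots$ over $\Sigma_k$. For each $n \geq 0$ let $x_n = t_1 t_2 \cdots t_n$ be its length-$n$ prefix; every $x_n$ is square-free, and the words $x_0, x_1, x_2, \ldots$ are pairwise distinct. The goal is to show that these prefixes fall into pairwise distinct equivalence classes of $\sim_{D_{k,2}}$ and of $\sim_{SD_{k,2}}$, which forces both relations to have infinitely many classes and hence both languages to be non-regular.

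To separate $x_m$ from $x_n$ for $m < n$, I would use the single distinguishing suffix $z = x_m^{\R}$. On the one hand, $x_m z = x_m x_m^{\R}$, and since $x_m$ is square-free, Lemma~\ref{lem:SD} gives $x_m x_m^{\R} \in SD_{k,2} \subseteq D_{k,2}$. On the other hand, write $x_n = x_m y$ where $y = t_{m+1} t_{m+2} \cdots t_n$; then $|y| = n - m > 0$ and $x_m y = x_n$ is square-free (being a prefix of $t$), so Lemma~\ref{lem:D} applies with this $x_m, y$ and yields $x_n z = x_m y x_m^{\R} \notin D_{k,2}$, and therefore $x_n z \notin SD_{k,2}$ as well since $SD_{k,2} \subseteq D_{k,2}$. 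Thus $z$ witnesses $x_m \not\sim_{D_{k,2}} x_n$ and $x_m \not\sim_{SD_{k,2}} x_n$ simultaneously.

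Having established that $\{x_n : n \geq 0\}$ is an infinite set of pairwise inequivalent words for both $\sim_{D_{k,2}}$ and $\sim_{SD_{k,2}}$, the Myhill--Nerode Theorem gives at once that neither $D_{k,2}$ nor $SD_{k,2}$ is regular, completing the proof. The argument contains no real obstacle beyond the bookkeeping of the decomposition $x_n = x_m y$; the one point worth stating carefully is why a single suffix $z = x_m^{\R}$ serves for both languages at the same time, namely that Lemma~\ref{lem:SD} places $x_m x_m^{\R}$ in the smaller language $SD_{k,2}$ while Lemma~\ref{lem:D} keeps $x_n x_m^{\R}$ out of the larger language $D_{k,2}$, so both inequivalences follow from the same pair of computations. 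It is also worth noting where the hypothesis $k > 2$ is used: it is exactly what guarantees the infinite square-free word $t$ supplying the prefixes $x_n$, in contrast to the binary case handled in Theorem~\ref{thm:binsq}.
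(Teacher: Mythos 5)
Your proposal is correct and follows essentially the same route as the paper: the same infinite square-free word, the same prefixes, the same distinguishing suffix $x_m^{\R}$, and the same appeal to Lemma~\ref{lem:SD}, Lemma~\ref{lem:D}, and Myhill--Nerode. Your explicit remark on why the single suffix handles both $D_{k,2}$ and $SD_{k,2}$ simultaneously is a nice clarification of a point the paper leaves implicit, but it is not a different argument.
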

\begin{proof}
Let $L = SD_{k,2}$ or $L = D_{k,2}$ for some $k > 2$ and consider an infinite square-free word $w = a_1a_2 \cdots$ over $\Sigma_k^*$.
Note an infinite square-free word exists whenever $k > 2$.
Let $w_n = a_1a_2 \cdots a_n$ be the first $n$ letters of $w$.
Consider $m < n.$ Then using Lemma~\ref{lem:SD} and Lemma~\ref{lem:D} we have $w_m w_m^R \in L$ but $w_nw_m^R = w_m (a_{m+1} \cdots a_n)w_m^R \not\in L$.
Thus $w_m \not\sim_L w_n$ and $L$ is not regular since $\sim_L$ has infinitely many equivalence classes.
\end{proof}

\section{Insertion and Indexed Languages}
\label{sec:ins}
Given two languages $L_1$ and $L_2$ we get a new language $(L_1 \ins L_2)$ called the \emph{insertion} of $L_2$ into $L_1$   defined by
$$(L_1 \ins L_2) := \{xyz : xz \in L_1, y \in L_2\}.$$
Insertion can be iterated  by letting $(L_1 \ins^0 L_2) = L_1$ and defining 
$$(L_1 \ins^i L_2) := ((L_1 \ins^{i-1} L_2) \ins L_2)$$
for $i > 0$.
We will be concerned with
$$(L_1 \ins^* L_2) := \bigcup_{i \geq 0} (L_1 \ins^i L_2).$$
This notion of insertion is defined and studied by Kari in~\cite{Kari}.

We now show that our language $D_{k,p}$ can be described in terms of insertions.
We have defined $D_{k,p}$ in terms of $p$-deletions so that we think of $D_{k,p}$ as the words that we can reduce to $\eps$ with a sequence of $p$-deletions.
The next lemma says that we can equivalently think of $D_{k,p}$ as those words which can be built from $\eps$ by insertions of $p$th-powers.
For $k,p > 0$ we let $L_{k,p} = \{w^p : w \in \Sigma^*_k\}$ denote the language of $p$th-powers over $\Sigma_k$.

\begin{lem}
For any $k,p >0$, $D_{k,p}  = (\eps \ins^* L_{k,p})$.
\label{lem:ins}
\end{lem}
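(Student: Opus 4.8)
The plan is to prove the two inclusions $D_{k,p} \subseteq (\eps \ins^* L_{k,p})$ and $(\eps \ins^* L_{k,p}) \subseteq D_{k,p}$ separately, in each case by induction, using the basic observation that a single $p$-deletion and a single insertion of a $p$th-power are inverse operations at the level of individual words. More precisely, the key elementary fact is: for words $u, v \in \Sigma_k^*$ and a $p$th-power $x^p \in L_{k,p}$, the word $ux^pv$ reduces to $uv$ by one $p$-deletion, and conversely $uv$ maps to $ux^pv$ by one insertion from $L_{k,p}$ (taking the decomposition $uv$ with the power $x^p$ inserted between the $u$-part and the $v$-part). I would state this correspondence once at the start of the proof so that both inductions can invoke it cleanly.

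For the inclusion $D_{k,p} \subseteq (\eps \ins^* L_{k,p})$, I would take $w \in D_{k,p}$ and fix a sequence of $p$-deletions $(d_1, \dots, d_\ell)$ with $w = w_0$, each $d_i$ a $p$-deletion from $w_{i-1}$ to $w_i$, and $w_\ell = \eps$. I would prove by downward induction on $i$ (or, equivalently, induction on $\ell - i$) that $w_i \in (\eps \ins^* L_{k,p})$; the base case $i = \ell$ is $\eps = (\eps \ins^0 L_{k,p}) \subseteq (\eps \ins^* L_{k,p})$. For the inductive step, if $w_i \in (\eps \ins^j L_{k,p})$ for some $j$, and $d_i$ deletes the power $x^p$ from $w_{i-1} = ux^pv$ to $w_i = uv$, then inserting $x^p$ back shows $w_{i-1} \in ((\eps \ins^j L_{k,p}) \ins L_{k,p}) = (\eps \ins^{j+1} L_{k,p}) \subseteq (\eps \ins^* L_{k,p})$. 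Taking $i = 0$ gives $w \in (\eps \ins^* L_{k,p})$.

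For the reverse inclusion $(\eps \ins^* L_{k,p}) \subseteq D_{k,p}$, I would induct on $i$ to show $(\eps \ins^i L_{k,p}) \subseteq D_{k,p}$. The base case is $(\eps \ins^0 L_{k,p}) = \{\eps\} \subseteq D_{k,p}$, since $\eps$ is $p$-deletable by the empty sequence. For the inductive step, any word in $(\eps \ins^i L_{k,p})$ has the form $w = xyz$ where $xz \in (\eps \ins^{i-1} L_{k,p}) \subseteq D_{k,p}$ and $y = u^p \in L_{k,p}$. Then $w = x u^p z$ admits a $p$-deletion to $xz$, and by the inductive hypothesis $xz$ has a sequence of $p$-deletions terminating in $\eps$; prepending the deletion $w \to xz$ yields such a sequence for $w$, so $w \in D_{k,p}$. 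Since $D_{k,p}$ contains every $(\eps \ins^i L_{k,p})$, it contains their union.

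I do not expect a serious obstacle here; the only points requiring a little care are bookkeeping ones. First, one must be slightly careful that the decomposition witnessing membership in $(\eps \ins^i L_{k,p})$ genuinely splits a word as (prefix)(inserted power)(suffix) in a way compatible with the $p$-deletion operation $ux^pv \to uv$ — but this is exactly the definition of insertion, so the match is immediate. Second, in the forward direction one should note that a $p$-deletion $ux^pv \to uv$ by definition deletes a contiguous block that is a $p$th-power, hence an element of $L_{k,p}$, which is precisely what is needed to run the insertion in reverse; this uses nothing beyond the definitions of $p$-deletion and of $L_{k,p}$.
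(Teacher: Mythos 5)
Your proof is correct and follows essentially the same route as the paper: an induction on $i$ showing $(\eps \ins^i L_{k,p}) \subseteq D_{k,p}$ for one inclusion, and reversing a terminating deletion sequence (which you formalize as a downward induction) for the other. The only difference is presentational — the paper compresses your downward induction into the phrase ``reading this sequence in reverse.''
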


\begin{proof}
We first show that $(\eps \ins^* L_{k,p}) \subseteq D_{k,p}$ by showing that  $(\eps \ins^i L_{k,p}) \subseteq D_{k,p}$ for all $i \geq 0$.
We have $\{\eps\} = (\eps \ins^0 L_{k,p}) \subseteq D_{k,p}$ and proceed by induction.
Assume $(\eps \ins^i L_{k,p}) \subseteq D_{k,p}$ and take $w \in (\eps \ins^{i+1} L_{k,p})$ so $w = xyz$ for $y \in  L_{k,p}$ and $xz \in (\eps \ins^i L_{k,p})$.
Then the $p$-deletion $w = xyz \to xz \in D_{k,p}$ shows that $w \in D_{k,p}$ as desired.

Conversely take $w \in D_{k,p}$ and let
$$w = w_0 \to w_1 \to \cdots \to w_{\ell} = \eps$$
be a sequence of $p$-deletions resulting in $\eps$.
Reading this sequence in reverse provides a sequence of insertions of $p$th-powers which shows that $w \in (\eps \ins^{\ell} L_{k,p}) \subseteq (\eps \ins^* L_{k,p})$.
\end{proof}

We will now give the definitions of indexed grammars and linear indexed grammars.
These grammars can generate languages between $\CFL$ and $\CSL.$
An indexed grammar is essentially a context-free grammar with the addition that each nonterminal symbol in a production rule receives a stack.
Indexed grammars were introduced by Aho~\cite{Aho}.
We give a formal definition below following Hopcroft and Ullman~\cite{HU}.

\begin{defn}
An indexed grammar is a 5-tuple $(N,T,I,P,S)$ where $N$ is the set of non-terminals, $T$ is the set of terminals, $I$ is the set of indices, $P$ is the finite set of productions, and $S \in N$ is the start symbol. 
Each production rule must be of one of the following forms:
\begin{align*}
A[\sigma] &\to \alpha[\sigma]\\
A[\sigma] &\to B[\gamma \sigma]\\
A[\gamma \sigma] &\to \alpha[\sigma]
\end{align*}
In the production rules above $A, B \in N,$ $\gamma \in I,$ $\sigma \in I^*,$ and $\alpha \in (N \cup T)^*$.
The notation $\alpha[\sigma]$ means each non-terminal symbol in $\alpha$ receives the stack $[\sigma]$.
\label{def:IG}
\end{defn}

Notice that the second type of production rule in Definition~\ref{def:IG} can be thought as a ``push,'' while the third type of production rule can be thought of as a ``pop.''
This motivates the terminology where $[\sigma]$ is referred to as the ``stack.''
Also note how the presence of the stack makes indexed grammars differ from context-free grammars.
For $A \in N$ and $\alpha \in (N \cup T)^*$ a production rule of the form $A[\sigma] \to \alpha[\sigma]$ is really an infinite family of production rules with one production rule for each $\sigma \in I^*.$ 
It turns out that the language $L_{k,p}$ consisting of $p$th-powers over $\Sigma_k$ is generated by an indexed grammar.
We now give an indexed grammar for $L_{k,p}$.

\begin{ex}
Here we give an indexed grammar for $L_{k,p}$ for any $k,p > 0$.
Consider the grammar $G = (N,\Sigma_k,I,P,S)$ where $N = \{S,T\}$, $\Sigma_k = \{a_i : 1 \leq i \leq k\}$, $I = \{\gamma_i : 1 \leq i \leq k\}$ and the production rules in $P$ are:
\begin{align*}
S [\sigma] &\to S[\gamma_i \sigma]  & 1 \leq i \leq k\\
S[\sigma] &\to (T[\sigma])^p\\
T[\gamma_i \sigma] &\to a_i T[\sigma] & 1 \leq i \leq k\\
T[]  &\to \eps
\end{align*}
\label{ex:pg}
\end{ex}

We now define a linear indexed grammar which is similar to an indexed grammar, but has the restriction that only one nonterminal symbol can receive the stack per production rule.
Linear indexed grammars were proposed by Gazdar~\cite{Gazdar}.
These grammars are \emph{mildly context-sensitive} and are weakly equivalent to many other grammars including tree adjoin grammars, head grammars, and combinatory categorial grammars~\cite{VSW1}.
Linear indexed grammars can be parsed in polynomial time~\cite{VSW}.

\begin{defn}
A linear indexed grammar is a 5-tuple $(N,T,I,P,S)$ where $N$ is the set of non-terminals, $T$ is the set of terminals, $I$ is the set of indices, $P$ is the finite set of productions, and $S \in N$ is the start symbol. Each production rule of one of the following forms:
\begin{align*}
A[\sigma] &\to \alpha[] B[\sigma] \beta[]\\
A[\sigma] &\to \alpha[] B[\gamma \sigma] \beta[]\\
A[\gamma \sigma] &\to \alpha[] B[\sigma] \beta[]\\
A[] &\to w
\end{align*}
In the production rules above $A, B \in N,$ $\gamma \in I,$ $\sigma \in I^*,$ $\alpha \in (N \cup T)^*,$ and $w \in T^*$.
\end{defn}

We have seen in Example~\ref{ex:pg} that $L_{k,p}$ is generated by an indexed grammar for any $k,p > 0$.
For $p=2,$ the language of squares $L_{k,2}$ can be generated by a linear indexed grammar.
We now given an linear indexed grammar for $L_{k,2}$.

\begin{ex}
Here we give a linear indexed grammar for $L_{k,2}$ for any $k > 0$.
Consider the grammar $G = (N,\Sigma_k,I,P,S)$ where $N = \{S,T\}$, $\Sigma_k = \{a_i : 1 \leq i \leq k\}$, $I = \{\gamma_i : 1 \leq i \leq k\}$ and the production rules in $P$ are given by:
\begin{align*}
S[\sigma] &\to a_i S[\gamma_i \sigma] & 1 \leq i \leq k\\
S[\sigma] &\to T[\sigma]\\
T[\gamma_i \sigma] &\to T[\sigma] a_i & 1 \leq i \leq k\\
T[] &\to \eps
\end{align*}
\label{ex:sqg}
\end{ex}

A language is call an \emph{indexed language} if it can be generated by an indexed grammar.
Similarly, a language is call a \emph{linear indexed language} if it can be generated by a linear indexed grammar.
Recall that $\IL$ and $\LIL$ denote the class of indexed languages and the class of linear indexed languages respectively.
Also recall that the class of context-free languages is denoted by $\CFL$ while the class of context-sensitive languages is denoted by $\CSL.$
These classes of languages satisfy the following strict inclusions
$$\CFL \subsetneq \LIL \subsetneq \IL \subsetneq \CSL.$$
We now prove a theorem which shows that indexed languages and linear indexed languages are closed under iterated insertion.
We call the readers attention to~\cite[Theorem 2.6] {Kari} and~\cite[Theorem 2.7]{Kari} which give the analogous results for context-free languages and context-sensitive languages.

\begin{thm}
If $L_1$ and $L_2$ are indexed languages, then $(L_1 \ins^* L_2)$ is an indexed language.
Also if $L_1$ and $L_2$ are linear indexed languages, then $(L_1 \ins^* L_2)$ is a linear indexed language.
\label{thm:ind}
\end{thm}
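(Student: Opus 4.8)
The plan is to give an explicit grammar construction. Suppose $G_1 = (N_1, T, I_1, P_1, S_1)$ generates $L_1$ and $G_2 = (N_2, T, I_2, P_2, S_2)$ generates $L_2$, where we may assume $N_1 \cap N_2 = \emptyset$ and $I_1 \cap I_2 = \emptyset$ by renaming. The key idea: $(L_1 \ins^* L_2)$ consists of all words obtained from a word of $L_1$ by repeatedly choosing a position and inserting a word of $L_2$ there. So I would build a grammar that first simulates $G_1$, but modified so that \emph{between} any two symbols produced (and at the ends) it can optionally splice in a fresh copy of $G_2$'s derivation, and moreover each such spliced-in copy is itself allowed to have further copies spliced into it — this recursion is exactly iterated insertion. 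Concretely, I introduce a new nonterminal $E$ that generates the language $(L_2 \ins^* L_2) \cup \{\eps\}$ — call it the ``filler'' — and I replace every terminal-producing step so that terminals are interleaved with $E$'s, and I let $E \to \eps$ or $E$ expand to a copy of $S_2$'s derivation in which, again, every terminal is surrounded by $E$'s.

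For the indexed case, here is the construction in more detail. Take $N = N_1 \cup N_2 \cup \{S, E\}$ with $S$ the new start symbol, $I = I_1 \cup I_2$, and $T$ unchanged. First add $S[\sigma] \to E S_1[\sigma]$... wait, more carefully: I want a homomorphism-like transformation on the right-hand sides of $P_1 \cup P_2$. Define, for $\alpha = X_1 X_2 \cdots X_m \in (N \cup T)^*$, the ``padded'' string $\widehat{\alpha}$ obtained by inserting $E$ before each $X_j$ that is a terminal and keeping nonterminals untouched; then for each production $A[\ldots] \to \alpha[\ldots]$ of the appropriate one of the three indexed forms in $P_1 \cup P_2$, put the production $A[\ldots] \to \widehat{\alpha}[\ldots]$ into $P$ (the $E$'s get the stack $[\sigma]$, which is harmless since $E$ will be shown to ignore its stack). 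Then add $S[\sigma] \to \widehat{S_1}[\sigma]$ wait — $S_1$ is a single nonterminal so $\widehat{S_1} = S_1$; instead I should start with $S[\sigma] \to E S_1[\sigma] E$ is wrong too since insertions can happen anywhere, not just at the ends — but once terminals are padded with $E$ on the left of each, the only missing spot is after the very last terminal, so I add a rule letting $S$ produce $S_1$ followed by a trailing $E$, or more simply pad on both sides. Finally, the filler rules: $E[\sigma] \to \eps$ and $E[\sigma] \to \widehat{S_2}[\sigma]$, i.e., $E[\sigma] \to E S_2[\sigma] E$ — no, again $S_2$ is one symbol, so just $E[\sigma] \to S_2[\sigma]$, and the padding of $S_2$'s own productions (already included above) does the interleaving, with a small adjustment so a trailing $E$ appears. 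I will then argue by induction on derivation length that the set of terminal strings derivable from $E$ is exactly $(L_2 \ins^* L_2) \cup \{\eps\}$ and from $S$ is exactly $(L_1 \ins^* L_2)$, using Lemma~\ref{lem:ins}-style reasoning: each use of an $E \to S_2[\ldots]$ step corresponds to one insertion, and the padding guarantees every position is reachable.

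For the linear indexed case the same blueprint works but I must respect the constraint that at most one nonterminal on a right-hand side carries the stack. The productions of $G_1$ and $G_2$ already have this shape: $A[\sigma] \to \alpha[\,] B[\sigma'] \beta[\,]$. When I pad with $E$'s, I want the inserted $E$'s to carry an empty stack, i.e., write them as $E[\,]$; since $E$ is going to generate a stack-independent language this is fine, and the single stack-bearing nonterminal $B$ is untouched, so the linear form is preserved. The one subtlety is the ``trailing $E$'': I can handle the ends by starting $S[\sigma] \to E[\,] S_1[\sigma]$ and, because terminal-producing rules in the linear case are $A[\,] \to w$ with $w \in T^*$ a string of terminals, I replace each such rule by $A[\,] \to$ (that string with $E[\,]$ inserted before every letter and one more $E[\,]$ appended), which is still of the allowed form $A[\,] \to w'$? — no, $w'$ now contains nonterminals $E$. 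So instead I keep $A[\,]\to w$ but precede it: replace $A[\sigma] \to \alpha[\,]B[\sigma]\beta[\,]$ analogues appropriately and handle pure-terminal rules by introducing an intermediate nonterminal that emits the padded version via linear-indexed-legal productions $A[\,] \to E[\,] a A'[\,]$, $A'[\,] \to E[\,] b A''[\,]$, and so on, ending $\cdots \to E[\,]$. This is routine but must be written out.

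The main obstacle I anticipate is not the construction but the \textbf{correctness proof}, specifically showing the two inclusions of the form ``terminal strings derivable from $E$'' $=$ ``$(L_2 \ins^* L_2)\cup\{\eps\}$'' and the corresponding statement for $S$. The $\subseteq$ direction needs a careful induction on the derivation tree: one shows that contracting every maximal subtree rooted at an $E$-introduced-$S_2$ back to a single point yields a valid $G_1$-derivation, so the yield lies in $(L_1 \ins^* L_2)$; the $\supseteq$ direction reverses this, using that the padding makes every gap in an intermediate sentential form occupied by some $E$ into which one can graft a copy of a $G_2$-derivation. Making ``every gap is occupied by an $E$'' precise — i.e., that in any sentential form the terminals and nonterminals strictly alternate with $E$'s in the right pattern — is the bookkeeping that needs care, and I would isolate it as a small invariant lemma proved by induction on production application. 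Everything else (closure, the stack being ignored by $E$, the linear-form check) is mechanical.
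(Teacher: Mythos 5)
Your construction is essentially the paper's: both pad the right-hand sides of the productions of $G_1$ and $G_2$ with a fresh ``insertion'' nonterminal placed next to the terminals, and let that nonterminal either vanish or launch a new copy of $G_2$, the recursion through $G_2$'s own padded productions accounting for iterated insertion. There is, however, one concrete defect in your indexed-grammar version. In an indexed production $A[\sigma]\to\alpha[\sigma]$ \emph{every} nonterminal of $\alpha$ inherits the stack $\sigma$, so your padded $E$'s arrive carrying whatever stack the ambient derivation has at that point, and your filler rule $E[\sigma]\to S_2[\sigma]$ then hands that stack to $S_2$. This does not ``ignore its stack'' as you promise it will: the foreign indices sit at the bottom of $S_2$'s stack, cannot be popped by $G_2$'s rules (you arranged $I_1\cap I_2=\emptyset$), and block any rule of $G_2$ that requires an empty stack, such as $T[]\to\eps$ in Example~\ref{ex:pg}. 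Insertions at such positions become underivable and the grammar undergenerates. The fix is exactly the paper's: give the insertion symbol the popping rules $S'[\gamma\sigma]\to S'[\sigma]$ for every index $\gamma$, and only then the rule $S'[]\to S_2[]$, so each inserted copy of $G_2$ starts from an empty stack. In the linear case this issue evaporates, as you correctly observe, because the non-designated occurrences in a linear indexed production carry empty stacks by definition.

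Your remaining concerns --- covering the position after the last terminal, and rewriting the terminal-only rules $A[]\to w$ so that padding stays within the allowed production shapes --- are legitimate bookkeeping. The paper handles the former by allowing $S'$ to be inserted before \emph{or after} any terminal and is essentially silent on the latter, so your explicit treatment (chaining intermediate nonterminals) is if anything more careful than the published argument; the same goes for your proposed correctness invariant, which the paper asserts without proof.
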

\begin{proof}
We first prove the case were $L_1$ and $L_2$ are indexed languages.
Let $L_i$ be generated by the indexed grammar $G_i = (N_i,T_i,I_i,P_i,S_i)$ with $i \in \{1,2\}$.
Now consider the indexed grammar $G = (N,T,I,P,S_1)$ where $N = N_1 \cup N_2 \cup \{S'\}$, $T= T_1 \cup T_2$, and $P = P_1 \cup P_2 \cup P'$.
The rules in $P'$ will be the rules which insert $L_2$.
This will be achieved by inserting the symbol $S'$.
For each rule in $P_1 \cup P_2$ of the form $A[\sigma] \to \alpha[\sigma]$ or $A[\gamma \sigma] \to \alpha[\sigma]$, we have a rule in $P'$ of the form $A[\sigma] \to \alpha'[\sigma]$ or $A[\gamma \sigma] \to \alpha'[\sigma]$ where $\alpha'$ is obtained from $\alpha$ by inserting $S'$ before or after any terminal symbol.
We also have the rules $S'[\gamma \sigma] \to S'[\sigma]$ for any $\gamma \in I$ and $\sigma \in I^*$, and we have the rule $S'[] \to S_2[]$.
This indexed grammar $G$ defines the language $(L_1 \ins^* L_2)$.

The case were $L_1$ and $L_2$ are linear indexed languages is similar.
When building a linear indexed grammar for $(L_1 \ins^* L_2)$ we proceed exactly as above, but we do not actually need the symbol $S'$.
We can simply insert $S_2$ directly with an empty stack.
\end{proof}

By combining Theorem~\ref{thm:ind} with Lemma~\ref{lem:ins}, Example~\ref{ex:pg}, and Example~\ref{ex:sqg} we get the following corollary.

\begin{cor}
For any $k,p>0$ we have $D_{k,p} \in \IL$ and $D_{k,2} \in \LIL$.
\label{cor:index}
\end{cor}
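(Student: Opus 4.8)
The plan is to assemble the corollary directly from the machinery already in place, so the proof is essentially a bookkeeping argument that chains together three prior results. First I would recall from Lemma~\ref{lem:ins} that $D_{k,p} = (\eps \ins^* L_{k,p})$ for every $k,p > 0$; this reduces the problem to showing that the right-hand side lies in the appropriate grammar class. Next I would observe that the one-element language $\{\eps\}$ is trivially both indexed and linear indexed (a single rule $S[] \to \eps$), and that by Example~\ref{ex:pg} the language $L_{k,p}$ of $p$th-powers is generated by an indexed grammar, while by Example~\ref{ex:sqg} the language $L_{k,2}$ of squares is generated by a linear indexed grammar.

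With these inputs identified, the final step is to invoke Theorem~\ref{thm:ind}: since $\{\eps\}$ and $L_{k,p}$ are both indexed languages, $(\eps \ins^* L_{k,p})$ is an indexed language, hence $D_{k,p} \in \IL$ for all $k,p > 0$. Specializing to $p = 2$, since $\{\eps\}$ and $L_{k,2}$ are both linear indexed languages, the linear-indexed half of Theorem~\ref{thm:ind} gives that $(\eps \ins^* L_{k,2})$ is linear indexed, so $D_{k,2} \in \LIL$. Combining with Lemma~\ref{lem:ins} then yields $D_{k,p} \in \IL$ and $D_{k,2} \in \LIL$ as claimed.

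Since every ingredient is already established, there is no genuine obstacle here — the corollary is a formal consequence. The only point that warrants a sentence of care is checking that the base language $\{\eps\}$ really does fit the grammar definitions (it is generated by the grammar with a single nonterminal $S$ and the single rule $S[] \to \eps$, which is a legal production in both Definition~\ref{def:IG} and the linear indexed grammar definition), and that the iterated-insertion closure in Theorem~\ref{thm:ind} is applied with $L_1 = \{\eps\}$ and $L_2 = L_{k,p}$ in the roles its statement expects. After that, the two displayed membership claims follow immediately.
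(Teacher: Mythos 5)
Your proof is correct and follows exactly the route the paper takes: combine Lemma~\ref{lem:ins} with Theorem~\ref{thm:ind}, using Example~\ref{ex:pg} and Example~\ref{ex:sqg} (and the trivial indexed/linear-indexed grammar for $\{\eps\}$) as the inputs. The extra sentence checking that $\{\eps\}$ fits the grammar definitions is a reasonable bit of care that the paper leaves implicit.
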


We have initiated the study of deleting powers in words and believe that this topic is interesting from both a combinatorics on words perspective as well as from a formal language perspective.
There are some natural open questions involving the study of the languages $D_{k,p}$ and $SD_{k,p}$ which we will briefly outline.
We have shown in Theorem~\ref{thm:notreg} that $D_{k,2}$ and $SD_{k,2}$ are not regular languages for $k > 2.$
We have also shown in Corollary~\ref{cor:index} that $D_{k,p}$ is an indexed language for any $k,p > 0$ while $D_{k,2}$ is a linear indexed language for any $k > 0$.
However, outside of the few special cases in Section~\ref{sec:reg} where some languages are shown to be regular, we do not have any proof showing whether or not these languages are context-free.
Furthermore, we do not have any results which determine if $D_{k,p}$ is a linear indexed language for $p > 2.$
The corresponding questions for the languages of strongly-deletable words seem more difficult.
A nontrivial result analogous to Corollary~\ref{cor:index} for the languages $SD_{k,p}$ is desirable but not known to us at this time.

The smallest open cases are squares in ternary words and cubes in binary words.
For squares in ternary words we know that $D_{3,2} \in \LIL.$
A next step would be to determine if $D_{3,2}$ is context-free or not, and perhaps a proof for $D_{3,2}$ would extend to larger alphabets.
For cubes in binary words is known that $L_{2,3} \not\in \LIL$ (see~\cite[Lemma 4.15]{Kall} for instance), but we do not know the whether $D_{2,3} \in \LIL$ or $D_{2,3} \not\in \LIL.$
The fact that the language of binary cubes is not a linear indexed language suggests $D_{2,3}$ may not be a linear indexed language.
However, $L_{k,p}$ and $D_{k,p}$ can certainly behave differently.
Note that the language of binary squares $L_{2,2}$ is not regular, in fact $L_{2,2}$ is not context-free, but $D_{2,2} = SD_{2,2}$ is a regular language.

\section*{Acknowledgement}
The author thanks the anonymous referees for their suggestions which have improved this paper.

\bibliography{Powers}

\end{document}